\newcolumntype{P}[1]{>{\centering\arraybackslash}m{#1}}
\newtheorem{thm}{Theorem}
\newtheorem{defn}{Definition}
\newtheorem{result}{Result}
\newtheorem{lemma}{Lemma}
\newcommand{\comment}[1]{}
\newcommand{\cc}{\mathcal{C}}
\newcommand{\cg}{\mathcal{G}}
\newcommand{\co}{\mathcal{O}}
\newcommand{\ch}{\mathcal{H}}
\newcommand{\cm}{\mathcal{M}}
\newcommand{\cb}{\mathcal{B}}
\newcommand{\empmod}{\mathcal{E}}
\definecolor{ngreen}{rgb}{0.2,0.7,0.2}
\definecolor{nred}{rgb}{0.9,0.1,0}
\definecolor{nblue}{rgb}{0.1,0.2,0.8}
\newcommand{\clg}{\cellcolor{lightgray}}
\begin{document}

\title{Graph-theoretic strengths of contextuality}
\author{Nadish de Silva}
\affiliation{Department of Computer Science, University College London,\\  WC1E 6BT London, United Kingdom}


\begin{abstract}
Cabello-Severini-Winter and Abramsky-Hardy (building on the framework of Abramsky-Brandenburger) both provide classes of Bell and contextuality inequalities for very general experimental scenarios using vastly different mathematical techniques. We review both approaches, carefully detail the links between them, and give simple, graph-theoretic methods for finding inequality-free proofs of nonlocality and contextuality and for finding states exhibiting strong nonlocality and/or contextuality.  Finally, we apply these methods to concrete examples in stabilizer quantum mechanics relevant to understanding contextuality as a resource in quantum computation.

\end{abstract}

\pacs{03.65.Ud, 03.65.Ta, 02.10.Ox}

\maketitle

\section{Introduction}

A fundamental question of quantum computation (QC) is to precisely identify the nonclassical features of quantum mechanics accounting for quantum advantages in computation.  This critical problem is often phrased in terms of finding properties characterizing those resource states capable of promoting a computational model to greater power, e.g. universal QC.  Understanding of quantum resources will yield both hardware efficiency gains and a theoretical basis for developing novel applications of quantum information.  Entanglement \cite{vidal}, superposition \cite{deutsch}, and discord \cite{datta} have been proposed as candidates but found unsatisfactory as explanations of quantum advantage.  Numerous striking results \citep{anders, galvao2005discrete, veitch,  delfosse2015wigner, raussrebit} have recently established \emph{contextuality} as necessary for magic state distillation (MSD) \cite{howard} and measurement-based quantum computation (MBQC) \cite{raussendorf}.

Our aim in this paper is to carefully delineate the connections between the vastly differing frameworks for contextuality used in these recent results in order to better understand the role of contextuality in QC and to conceptually clarify contextuality.  We use this connection to give directly computable graph-theoretic characterisations of the logical strengths of contextuality and apply these tools to examples in stabilizer quantum mechanics relevant to QC.

\subsection{Nonlocality \& contextuality}

Contextuality is a generalization of \emph{nonlocality}: the notion that the predictions of quantum mechanics do not admit a model that is classical in the sense of being locally causal.  That is, one in which, upon conditioning on a causal past, the joint distributions describing experiments performed at different sites factorize into distributions associated with each site \cite{bell1964einstein, wiseman2014two}.   Data from a two-site experiment are correlations $p(a,b|A,B)$ where $A, B$ and $a,b$ are measurement settings and outcomes respectively.  A locally causal model for such data is a space $\Lambda$ of hidden variables, a distribution $q$ on $\Lambda$, and conditional distributions $r_A(-|\lambda), r_B(-|\lambda)$ on outcomes that account for the data: $$p(a,b|A,B) = \sum_{\lambda \in \Lambda} \, q(\lambda) \cdot r_A(a|\lambda)r_B(b|\lambda).$$   A \emph{Bell inequality} is a bound on a weighted sum of correlations that is satisfied by all data a locally causal model.  Data not admitting any such model exhibits \emph{nonlocality}.  Bell gave the first such inequality and an entangled quantum state that violates it.  Decades later, entanglement and nonlocality became the basis of quantum communication protocols, e.g. superdense coding, quantum teleportation, and quantum cryptography \cite{benn1, benn2, benn3}. 

There are degrees of strength of nonlocality.  The correlations arising from a Bell state violate a Bell inequality.  Hardy states \cite{hardy1992quantum, hardy1993} preclude a locally causal model via logical arguments without the need for inequalities.  The Greenberger-Horne-Zeilinger (GHZ) state \cite{ghz} is, in a sense defined below, maximally nonlocal.  Nonlocality has been studied as a resource for communication tasks and it has been recognized that stronger nonlocality yields greater advantages \cite{vandamthesis,barrett, barrett2005nonlocal, buhrman2010nonlocality}. 

Contextuality
is the notion, due to Kochen-Specker \cite{ks}, that a system's observable properties cannot all be assigned deterministic outcomes in a manner independent of the method of observation used to acertain each property.  It subsumes nonlocality as a special case since Fine's theorem \cite{fine} tells us that data admitting a locally causal model also admit one in which the distributions $r_A(-|\lambda), r_B(-|\lambda)$ are deterministic.  Like nonlocality, contextuality admits a description via  inequalities, e.g. Klyachko \emph{et al.}'s \emph{contextuality inequality} \cite{klyachko}.  Unlike nonlocality, contextuality can manifest in single-site systems.

\subsection{Contextuality as a resource in QC}

Recently, the hypothesis that contextuality plays a critical role in enabling quantum computation---analogous to the role nonlocality  plays in superclassical communication---has received considerable attention.  For example, Howard \emph{et al.} \cite{howard} showed that contextuality is a necessary criterion in finding those quantum states that are suitable for the magic state distillation (MSD) protocol \cite{bravkit} which promotes a scheme of classical computing power to (fault-tolerant) quantum universality.   

A natural question is: does stronger contextuality yield greater computational advantages?  Positive evidence for this question is given by Raussendorf \cite{raussendorf} in the setting of measurement-based quantum computers (MBQCs) \cite{raussendorf2001one}.  Further progress on the resource theory of contextuality necessitates clarification of the strengths of contextuality.

The aforementioned results on contextuality as a resource rely on mathematical tools developed by Cabello-Severini-Winter (CSW) \cite{CSW} and Abramsky-Brandenburger (AB) \cite{AB} which promote the example-based understanding of nonlocality and contextuality to a higher-level, structural one.  They apply in any experimental scenario and both provide a complete set of Bell/contextuality inequalities.  However, they differ vastly in their approach and how they relate to one another is not immediately clear.  The plurality of perspectives and vernaculars for describing contextuality constitutes an obstacle to progress on a computational resource theory of contextuality.

\subsection{Overview}

We first review some definitions and results of CSW and AB before we

\begin{enumerate}
\item Precisely detail the links between their two furnished classes of Bell/contextuality inequalities.
\item Cast the powerful, highly abstract, logical hierarchy of strengths of contextuality due to AB in terms of graph invariants in order to render them more tractable as theoretical tools and enable direct computation using standard software libraries.
\item Pose the question, does the logical strength of contextuality of a state correlate with its usefulness as a resource?
\end{enumerate}

We provide three computational examples to demonstrate how these tools may be used to gain insight into the contextuality in QC program.

The first result shows that the sort of contextuality witnessed by logical contradictions (rather than mere violation of contextuality inequalities) may not be useful in identifying the single-qutrit states suitable for magic state distillation.  However, we provide evidence that it may be useful in identifying higher-qudit magic states.
The third result addresses the question that Howard \emph{et al.} end their paper with:
\begin{quote}In the qubit case, it is a pressing open question whether a suitable operationally motivated refinement or quantification of contextuality can align more precisely with the potential to provide a quantum speed-up. \end{quote}
Our result asserts that the quantification sought would necessarily need to discriminate between states that are all maximally contextual.  One might hope to achieve this by considering measures specialized to the particular setting of stabilizer QM, e.g. \cite{howard2016application}.

As we shall see, nonlocality is the special case of contextuality where the observables are compatible (i.e. comeasurable) precisely when they correspond to spatially separated local measurements.  Thus, in the remainder of this paper, we use the terminology contextuality and contextual inequalities to include the special cases of nonlocality and Bell inequalities.


\section{General approaches to contextuality}

Both the AB and CSW approaches begin by abstractly describing physical experiments without making an assumption of simultaneous comeasurability of all properties and by describing how operational data from such an experiment is tabulated.  The primary question asked of operational data is whether it can, in principle, be reproduced by a model in which observables simultaneously possess deterministic values.  

AB's and CSW's representations of experiments and data differ significantly.  Before delineating their connections, we review the key structures and results of each.


\subsection{CSW inequalities}CSW employ graph theory to study nonlocality and contextuality for general experimental scenarios.  (The basic graph-theoretic definitions necessary for understanding what follows are found in Appendix A).  An experiment is formalized by an \emph{exclusivity graph}.  The vertices of such a graph represent events, or, alternatively, answers to propositions about a system answerable by the experiment.  Vertices that are adjacent represent \emph{mutually exclusive} events.

Operational data coming from repeatedly performing the experiment on a fixed preparation of the system is tabulated using probabilities $p_i$ for each vertex $v_i$.  CSW inequalities are upper bounds on a linear combination of these probabilities: $\Sigma_i w_i p_i$ where $p_i$ is the likelihood of observing the $i$-th event and $w_i \geq 0$ is a coefficient.  Such a linear combination is compactly represented as a weighted graph $(G,w)$.

CSW define a \emph{classical model} for operational data represented by a graph $G$ and probabilities $p_i$ as a classical sample space $\Lambda$, an event $e_i \subset \Lambda$ for each vertex $v_i$ such that the events corresponding to adjacent vertices are mutually exclusive (i.e. disjoint subsets of $\Lambda$), and a probability distribution $\mu$ on $\Lambda$ such that $p_i$ is the probability that $\mu$ assigns to $e_i$.  

For a linear combination $\Sigma_i w_i p_i$, represented by the weighted graph $(G,w)$, one can ask the question: what is the maximum this sum can achieve if the data $p_i$ has a classical model?  CSW show that this maximum is precisely the weighted independence number of $(G,w)$.  Data that violate any one of these bounds, for some choice of coefficients $w_i$, cannot be explained with a classical model.  Therefore, CSW contextuality inequalities are of the form $$\sum_i w_i p_i \leq \alpha(G,w).$$

A \emph{quantum model} is a Hilbert space $\ch$, projectors $P_i \in \cb(\ch)$ for each vertex $v_i$ such that adjacent vertices are represented by orthogonal projectors, and a pure state $\ket{\psi} \in \ch$ such that the probabilities $p_i$ arise as $\bra{\psi} P_i \ket{\psi}$.  In this case, the maximum  $\Sigma_i w_i p_i$ can achieve over all quantum models (i.e. the Cirel'son-type bound) is bounded above by the Lov\'asz theta number: $\Sigma_i w_i p_i \leq \vartheta(G,w)$.

Finally, a \emph{generalized model}\footnote{These models are those that satisfy the \emph{consistent exclusivity} or \emph{E1} principle.} is simply data $p_i$ such that the sum of probabilities corresponding to all the vertices in a clique is less than or equal to 1.  The maximum  $\Sigma_i w_i p_i$ can achieve over all generalized models is given by the fractional packing number: $\Sigma_i w_i p_i \leq \alpha^*(G,w)$.


\subsection{The AB sheaf-theoretic approach}
In the AB approach\footnote{We give an elementary introduction with simplified terminology.  Curious readers are encouraged to read \cite{AB}.}, an experiment is formally described by a \emph{measurement scenario}: a pair $(\cm, \cc)$ where $\cm$ is a set of abstract labels for measurements and $\cc$ is the set of \emph{contexts}.  A context is a set $C \subset \cm$ representing a maximal set of compatible (i.e. comeasurable) measurements; thus, it is required that if $C$ is a context, no proper subset of $C$ is also a context.  It is further required that every measurement $m \in \cm$ is contained in at least one context.  Measurement of each individual $m \in \cm$ yields a value from the outcome set $\co$ (usually $\{0,1\}$).

For example, the standard Bell scenario is described with $\cm = \{A_0, A_1, B_0, B_1\}$ and contexts $\cc = \{\{A_0,B_0\}$, $\{A_0,B_1\}$, $\{A_1,B_0\}$, $\{A_1,B_1\}\}$.

A \emph{formal event} is a function $e: S \to \co$ from a set $S \subset \cm$ of measurements to outcomes; in other words, a joint outcome for all of the measurements in $S$.  Note that this is a strictly mathematical construction as it is not assumed that $S$ is a subset of a context.  In other words, the measurements in $S$ are not assumed to be physically comeasurable.  The set of all formal events for a fixed $S$ is denoted by $E(S) = \{e: S \to \co\}$.  Whenever $S' \subset S$, a \emph{coarse-graining} of a formal event $e: S \to \co$ is defined by forgetting the outcomes for measurements in $S$ but not in $S'$; it is denoted by $e|_{S'}$ and explicitly defined by $e|_{S'}(M') = e(M')$ for  $M' \in S'$.  

The partial distribution sets $D(S)$ are then defined as the set of all probability distributions on the formal event set of $S$, i.e. $D(S) = \{ p: E(S) \to [0,1] \,|\, \Sigma_{e \in E(S)} p(e) = 1\}$.  Whenever $S' \subset S$, the \emph{marginal  distribution} of $p \in D(S)$ is defined by averaging over the the information about measurements in $S$ but not in $S'$; it is denoted by $\mu^S_{S'}(p)$ and explicitly defined, for  $e' \in E(S')$, by $$\mu^S_{S'}(p)(e') = \sum_{e|_{S'} = e'}^{e \in E(S)}  p(e).$$  

Operational data from repeated experiments on a system in a fixed preparation are tabulated by families of probability distributions $\empmod_C \in D(C)$ that are indexed by contexts $C \in \cc$.  They describe the likelihoods of joint outcomes for all the maximal sets of comeasurable measurements.  Such data are \emph{nonsignalling} (also known as \emph{nondisturbing}) when they yield common marginal distributions for each intersection of contexts: $\mu^C_{C \cap C'}(\empmod_{C}) = \mu^{C'}_{C \cap C'}(\empmod_{C'})$ for any pair $C,C'$ of contexts.  When data obeys this nonsignalling condition, the distributions $\empmod_C$ constitute an \emph{empirical model} $\empmod$.

\begin{table}[h!]
 \begin{tabular}{|| P{0.4cm}P{0.4cm} | P{0.4cm} P{0.4cm} P{0.4cm} P{0.4cm} ||} 
 \hline
 $A$ & $B$ & 00 & 01 & 10 & 11\\ 
 \hline
 $A_0$ & $B_0$ & \clg \nicefrac{1}{2} & 0 & 0 & \clg \nicefrac{1}{2} \\ 
 $A_0$ & $B_1$ & \clg\nicefrac{3}{8} & \nicefrac{1}{8} & \nicefrac{1}{8} & \clg\nicefrac{3}{8} \\ 
 $A_1$ & $B_0$ & \clg\nicefrac{3}{8} &  \nicefrac{1}{8} & \nicefrac{1}{8} & \clg\nicefrac{3}{8} \\ 
 $A_1$ & $B_1$ & \nicefrac{1}{8} & \clg \nicefrac{3}{8} & \clg \nicefrac{3}{8} & \nicefrac{1}{8} \\ 
  \hline
\end{tabular}
\quad \quad
 \begin{tabular}{|| P{0.4cm}P{0.4cm} | P{0.4cm} P{0.4cm} P{0.4cm} P{0.4cm} ||} 
 \hline
 $A$ & $B$ & 00 & 01 & 10 & 11\\ 
 \hline
 $A_0$ & $B_0$ & \clg\nicefrac{1}{2} & 0 & 0 & \clg\nicefrac{1}{2} \\ 

 $A_0$ & $B_1$ & \clg\nicefrac{1}{2} & 0 & 0 & \clg\nicefrac{1}{2} \\ 
 $A_1$ & $B_0$ & \clg\nicefrac{1}{2} & 0 & 0 & \clg\nicefrac{1}{2} \\ 
 $A_1$ & $B_1$ & 0 & \clg\nicefrac{1}{2} & \clg\nicefrac{1}{2} & 0 \\ 
  \hline
\end{tabular}
\caption{Bell state correlations (L); PR box correlations (R).  Each row is a context and a distribution on joint outcomes.}
\end{table}

An empirical model is local/noncontextual precisely when its predictions can be accounted for by a locally causal/noncontextual hidden variable model.  That is, there is a hidden variable space $\Lambda$, a distribution $q$ on $\Lambda$, and conditional distributions $r_M(-|\lambda)$ on $\co$ for each $M \in \cm$ such that for any context $C = \{M_1, ..., M_N\}$ and $e: C \to \co$: $$ \empmod_C(e) = \sum_{\lambda \in \Lambda} \, q(\lambda) \cdot r_{M_1}(o_1|\lambda) \cdots r_{M_N}(o_N|\lambda),$$ where $o_i = e(M_i)$ is the outcome of the $i$-th measurement.  Equivalently, an empirical model $\empmod$ is noncontextual when there is a joint distribution $J \in D(\cm)$ yielding the $\empmod_C$ as marginals: $\empmod_C = \mu^\cm_C(J)$.  Thus, the hidden variable space can be taken to have the canonical form of $\Lambda = E(\cm)$ whose \emph{canonical hidden variables} are functions $\lambda: \cm \to \co$ and the $r_M(-|\lambda)$ are the deterministic distributions assigning probability 1 to the outcome $\lambda(M)$ and 0 to all others.  That an empirical model that arises as a family of marginals distributions of a single joint probability distribution on all measurements always admits such a canonical model is a vast generalization of Fine's theorem to all experimental scenarios.  

Abramsky-Hardy \cite{logicalbi} give a complete set of Bell and contextuality inequalities via logical consistency; AB describe two natural logical strengths of contextuality.  We detail these tools, and how they may be understood in terms of graph invariants, below.

\section{The exclusivity graph of a measurement scenario}
In this section, we build a CSW exclusivity graph from AB's description of an experiment.  Explicitly corresponding the two representations of experiments and data facilitates the synthesis of insights and tools of each approach.

An \emph{observable event} is an $e \in E(C)$ for some context $C \in \cc$.  They are the elementary events observable in an experiment, or, alternatively, answers to the most refined questions one can ask of a system.  Given an empirical model $\empmod$, an observable event $e$ is \emph{impossible} when $\empmod_C(e) = 0$; otherwise, it is \emph{possible}.

Given a probability table for an empirical model, the observable events correspond to the (empty) cells.  An empirical model provides the actual probabilities. Two observable events are called \emph{mutually exclusive} or \emph{inconsistent} when they assign different outcomes to a shared measurement.

\begin{defn}The \emph{exclusivity graph $\cg(\cm, \cc)$ of the measurement scenario} $(\cm, \cc)$ has vertices given by the observable events $e: C \to \co$ where $C \in \cc$ is a context.   Two vertices $e_1: C_1 \to \co$ and $e_2: C_2 \to \co$ are adjacent whenever $e_1$ and $e_2$ are \emph{inconsistent}; that is, whenever $e_1(M) \neq e_2(M)$ for some measurement $M$ in both $C_1$ and $C_2$.
\end{defn}

Ac\'in \emph{et al.} associate a hypergraph to a measurement scenario via \emph{measurement protocols} \cite[D.1.4]{acin}.  The nonsignalling conditions of empirical models are encoded as clique constraints on this hypergraph's associated non-orthogonality graph \cite[2.3.1]{acin}.  Definition 1 is isomorphic to the complement of the graph built this way; see Appendix C for a proof.  Here, we focus on directly constructing the exclusivity graph of a measurement scenario in order to understand sheaf-theoretic tools in terms of graph invariants.

\begin{figure}[!htb]
\centering
\includegraphics[width=0.41\textwidth]{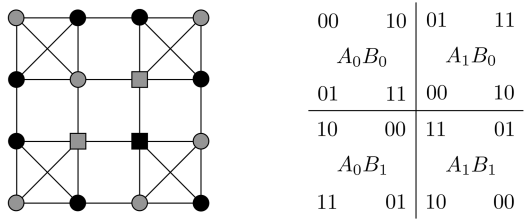}
\caption{The exclusivity graph for the standard Bell scenario (L); labels of measurement settings and outcomes (R).  Straight lines connect exclusive events.  The standard CHSH inequality \cite{chsh} is derived by giving the grey vertices the weight 1 and the others 0; the independence number is 3.  Circular vertices are the possible events of a Hardy model.  Grey vertices are the possible events of a PR box \cite{popescu1994quantum}.}
\end{figure}


\begin{defn}The \emph{support graph} of an empirical model $\empmod$ is the induced subgraph $\cg_\empmod$ of $\cg(\cm, \cc)$ given by retaining only the possible events.
\end{defn}

These definitions generalize constructions used by Sadiq \emph{et al.} \cite{sadiq} and Fritz \emph{et al.} \cite{fritz2013local, sainz2014exploring} for analysing nonlocality to the setting of contextuality.  

In our proofs, we repeatedly exploit the following key connection between the approaches of AB and CSW:

\begin{lemma}
For any measurement scenario $(\cm,\cc)$, there is a bijective correspondence between canonical hidden variables $\lambda: \cm \to \co$ and independent sets $I \subset \cg(\cm,\cc)$ of size $|\cc|$. 
\end{lemma}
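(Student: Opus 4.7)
The plan is to exhibit explicit mutually inverse maps. In one direction, send a canonical hidden variable $\lambda : \cm \to \co$ to the set $\Phi(\lambda) = \{\lambda|_C : C \in \cc\}$ of its restrictions to the contexts. In the other direction, given an independent set $I$ of size $|\cc|$, show that $I$ must pick out exactly one observable event $e_C : C \to \co$ per context and that these events ``glue'' consistently to define a function $\Psi(I) : \cm \to \co$.

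First I would check that $\Phi$ lands where claimed: each restriction $\lambda|_C$ is an observable event, and since observable events are tagged by their context the $|\cc|$ restrictions are distinct, giving $|\Phi(\lambda)| = |\cc|$. For any two contexts $C_1, C_2$ and any shared measurement $M \in C_1 \cap C_2$, both $\lambda|_{C_1}(M)$ and $\lambda|_{C_2}(M)$ equal $\lambda(M)$, so the two events are consistent and therefore non-adjacent in $\cg(\cm,\cc)$. Hence $\Phi(\lambda)$ is independent.

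Next I would establish surjectivity, which is where the real content lies. The key observation is that the observable events belonging to a single fixed context $C$ form a clique in $\cg(\cm,\cc)$: any two distinct $e, e' : C \to \co$ disagree on some $M \in C$, so they are inconsistent. The vertex set of $\cg(\cm,\cc)$ is thus partitioned into $|\cc|$ cliques indexed by contexts, so any independent set has size at most $|\cc|$, with equality forcing exactly one vertex $e_C$ per clique. For any $M \in C_1 \cap C_2$, the non-adjacency of $e_{C_1}$ and $e_{C_2}$ forces $e_{C_1}(M) = e_{C_2}(M)$, so the assignment $\Psi(I)(M) := e_C(M)$ for any $C \ni M$ is well-defined; since every $M \in \cm$ lies in at least one context, $\Psi(I) : \cm \to \co$ is a total function. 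By construction $\Phi(\Psi(I)) = I$, and conversely $\Psi(\Phi(\lambda)) = \lambda$ because $\lambda(M)$ can be recovered as $\lambda|_C(M)$ for any $C \ni M$, giving mutual inverses.

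The only genuinely substantive step is the clique observation together with the use of the covering property of $\cc$ to define $\Psi$ globally; the rest is bookkeeping. I anticipate the main point to be careful about is the distinction between an observable event as an abstract function and as a vertex of $\cg(\cm,\cc)$ tagged by its context, since this is what guarantees that $|\Phi(\lambda)| = |\cc|$ rather than something smaller when different contexts happen to induce ``the same'' partial function on an intersection.
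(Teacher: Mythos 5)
Your proof is correct and follows essentially the same route as the paper's: the observation that each context's events form a clique (bounding independent sets by $|\cc|$ and forcing one event per context at equality), the gluing of a maximal independent set into a well-defined $\lambda$ via non-adjacency on shared measurements, and the coarse-graining map in the reverse direction. You are somewhat more explicit than the paper in verifying that the two maps are mutually inverse and that the $|\cc|$ restrictions are distinct, but the substance is identical.
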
 
As the observable events of a common context form a clique, the size of an independent set in $\cg(\cm,\cc)$ can contain at most one observable event from each clique.  Thus, the size of an independent set is bounded by the number of contexts.  Independent sets whose size achieves this upper bound must contain precisely one event of each context.

Given such an independent set $I$, one can construct a canonical hidden variable $\lambda_I: \cm \to \co$ by defining $\lambda_I(M) = e(M)$ where $e: C \to \co$ is any event in $I$ such that $M \in C$.  If $e': C' \to \co$ is another event in $I$ such that $M \in C'$, the fact that $e$ and $e'$ are not adjacent tells us that $e(M) = e'(M)$; thus, $\lambda_I$ is well-defined.  Conversely, given a canonical hidden variable $\lambda: \cm \to \co$, the set $I_\lambda = \{\lambda|_C: C \to \co \;|\; C \in \cc\}$ of its coarse-grainings form an independent set.

\section{Logical Bell inequalities}The \textit{logical Bell inequalities} of $(\cm, \cc)$ are a distinguished class of Bell or contextuality inequalities identified by Abramsky and Hardy \cite{logicalbi}.  
Working in the setting of measurements with binary outcomes\footnote{The case of measurements with different outcome sets is easily reduced to the binary case.} ($\co = \{0,1\}$), we can encode observable events $e$ as propositional formulae $f(e)$ by viewing the measurements $M \in \cm$ as Boolean variables\footnote{The symbols $\wedge, \vee, \neg$ represent AND, OR, NOT respectively.}: $$f(e) = \bigwedge\limits_{M \in C} \,
  \begin{cases} 
      \hfill M    \hfill & \text{ if $e(M) = 1$} \\
      \hfill \neg M \hfill & \text{ if $e(M) = 0$} \\
  \end{cases}\;. $$  For example, the two grey events of the last row in Table 1 are individually encoded as $A_1 \wedge \neg B_1$ and $\neg A_1 \wedge B_1$.  
 
Two vertices are adjacent precisely when their formulae are logically inconsistent, i.e. one cannot assign \emph{true} or \emph{false} to the variables in $\cm$ in a way making both their formulae true.

 Given $N$ Boolean formulae $f_i$ that are true with probability $p_i$, if the formulae are logically inconsistent, then it follows from elementary probability theory \citep[\S I.A]{logicalbi} that $$\sum p_i \leq N - 1.$$  A logical Bell inequality is built by choosing a subset of the observable events $E_i \subset E(C_i)$ from each context $C_i \in \cc$ and constructing the formulae
$$ f_i = \bigvee\limits_{e \in E_i} \, f(e) \;. $$  When the formulae $f_i$ contain a contradiction, we obtain an inequality: the sum of the likelihoods of all of the events in all of the $E_i$ is less than or equal to $N - 1$.

For example, the grey subset of events of the last row (in either table) of Table 1 is encoded as $f_4 = (A_1 \wedge \neg B_1) \vee (\neg A_1 \wedge B_1)$ or, equivalently, $A_1 \iff \neg B_1$.  The CHSH inequality is derived by noting that the formulae for all four rows ($A_0 \iff B_0$, $A_0 \iff B_1$, $A_1 \iff B_0$, $A_1 \iff \neg B_1$) are contradictory and so the sum of probabilities of all the grey events has a classical bound of 3.  The Bell state violates this bound by \nicefrac{1}{4}.  The PR box maximally violates it by 1.

\begin{thm}
Logical Bell inequalities can be seen as examples of CSW inequalities on the exclusivity graph $\cg(\cm, \cc)$.\footnotemark[\getrefnumber{proofB}]
\end{thm}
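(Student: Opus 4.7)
The plan is to exhibit, for any logical Bell inequality, explicit vertex weights on $\cg(\cm,\cc)$ such that the corresponding CSW inequality reproduces it. Given subsets $E_i \subset E(C_i)$ for each context $C_i \in \cc$ whose associated formulae $f_i = \bigvee_{e \in E_i} f(e)$ are jointly inconsistent, I would set $S = \bigcup_i E_i$ and take $w$ to be the $\{0,1\}$-valued indicator function of $S$ on the vertex set of $\cg(\cm,\cc)$.

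First I would check that the two linear functionals of the empirical probabilities agree. Any two distinct events $e, e' \in E_i$ disagree on the outcome of some measurement in their common context $C_i$, so they are adjacent in $\cg(\cm,\cc)$ and hence mutually exclusive under any empirical model $\empmod$. Since each $f_i$ is a disjunction of such pairwise exclusive event formulae, the probability that $f_i$ holds equals $\sum_{e \in E_i} \empmod_{C_i}(e)$; summing over $i$ identifies $\sum_i p_i$ with the CSW functional $\sum_{v} w_v p_v$ attached to the weighted graph $(\cg(\cm,\cc), w)$.

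The substantive step is to bound $\alpha(\cg(\cm,\cc), w) \leq N-1$, where $N = |\cc|$. Because $w$ is $\{0,1\}$-valued this reduces to showing that no independent set contained in $S$ has cardinality $N$. The observable events sharing any fixed context form a clique in $\cg(\cm,\cc)$, so any independent set has size at most $N$, with equality forcing it to contain exactly one event per context. Suppose such a maximum independent set $J \subseteq S$ existed; then $J$ would select some $e_i \in E_i$ for every $i$. By Lemma 1, $J$ produces a canonical hidden variable $\lambda_J: \cm \to \co$ under which each $f(e_i)$, and therefore each $f_i$, evaluates to true — directly contradicting the assumed inconsistency of $\{f_i\}$. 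Hence $\alpha(\cg(\cm,\cc), w) \leq N-1$, and the CSW inequality with these weights is either identical to, or a sharpening of, the logical Bell inequality $\sum_i p_i \leq N-1$.

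The main obstacle is conceptual rather than computational: recognising that Lemma 1 is exactly the device converting the propositional statement \emph{``no truth assignment to $\cm$ simultaneously satisfies all $f_i$''} into the graph-theoretic statement \emph{``no size-$N$ independent set of $\cg(\cm,\cc)$ is contained in $S$''}. Once this dictionary is in place the rest is routine unpacking of definitions; the only minor bookkeeping point is that every observable event lies in a unique context, so the decomposition of a candidate maximum independent set $J \subseteq S$ into a family $(e_i)_i$ with $e_i \in E_i$ is unambiguous.
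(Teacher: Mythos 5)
Your proposal is correct and follows essentially the same route as the paper's own proof: assign weight $1$ to events in $\bigcup_i E_i$ and $0$ elsewhere, then observe that a size-$N$ independent set inside $\bigcup_i E_i$ would yield (via Lemma 1) a truth assignment satisfying all the $f_i$, contradicting their inconsistency, so $\alpha(\cg(\cm,\cc),w) \leq N-1$. The only differences are presentational: you make explicit the identification of the two linear functionals and the appeal to Lemma 1, both of which the paper leaves implicit.
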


The inequalities are constructed by giving the events of $E_i$ the  integer weight 1 and all other events weight 0.  The CSW bound given by the independence number is tight whereas the AB bound of $N-1$ need not be.  

Abramsky-Hardy also define \emph{extended logical Bell inequalities} by choosing sets of observable events $E_i \subset E(C_i)$ whose formulae $f_i$ are contradictory as above; however, the coefficients in these inequalities are allowed to be any integer $k_i \geq 0$.  The upper bound they give, defined in terms of logical consistency, is, in fact, simply the CSW bound given by the weighted independence number.  As these extended inequalities have the same integer coefficient for all observable events of a common context, they form a much smaller class of contextual inequalities than all possible CSW inequalities on $\cg(\cm,\cc)$ which allow each individual observable event its own coefficient.  However, they still form a \emph{complete} class of contextuality inequalities in the sense that knowing that an empirical model satisfies only the extended logical Bell inequalities is sufficient to conclude that it is noncontextual.

The logical interpretation of contextuality inequalities given by Abramsky-Hardy can be extended to all CSW inequalities: the weighted independence number can be seen as maximizing the weighted sum over logically consistent formulae.

\section{Hardy's paradox \& logical contextuality}Considerable work has been done on establishing Bell's theorem without inequalities \cite{ghz, hardy1993,cabello2001bell,cabello2002bell,yokota2009direct,aharonov2002revisiting}.  The idea is to preclude locally causal models of quantum theory using knowledge of the mere possibility or impossibility of certain correlated events rather than the precise correlations.  While the first proof of Bell's theorem without inequalities was the GHZ state, Hardy's paradox \cite{hardy1992quantum} better exemplifies the subtle logical contradictions at play.  As inequality-free proofs of nonlocality cannot be accomplished with standard Bell states, Hardy's paradox can be interpreted as a stronger form of nonlocality than the mere statistical one.  In the AB framework, this notion is generalized to \emph{logical nonlocality and contextuality}; we exploit this to give a graph-theoretic characterization of operational data that preclude locally causal/noncontextual models without resort to inequalities.

A state is logically nonlocal/contextual if any locally causal/noncontextual model that predicts the occurrence of all empirically observed events must also predict the occurrence of an event that is not empirically observed.  More formally: there is a possible event $e$ such that any canonical hidden variable $\lambda: \cm \to \co$ that is compatible with $e$ (i.e. the coarse-graining $\lambda|_C$ is $e$) must also predict the occurrence of an impossible event (there is a context $C' \in \cc$ for which $\lambda|_{C'}$ is impossible).  A state that satisfies all logical Bell inequalities cannot be logically contextual. 

Finding a graph-theoretic characterisation of logical contextuality requires defining a new graph invariant heretofore unused in the CSW approach: 
\begin{defn}A The \emph{independence degree} of a vertex $v$ in a finite graph $G$ is the size of the largest independent set in $G$ that contains $v$; the \emph{minimal independence number} is the minimum independence degree over all vertices.
\end{defn}

Note that the standard independence number of a graph is the maximum independence degree over all vertices.

\begin{thm}
\label{hardy}
An empirical model $\empmod$ is logically nonlocal or contextual (i.e. admits an inequality-free proof of nonlocality or contextuality) if and only if the minimal independence number of its support graph $\cg_\empmod$ is less than the number of contexts.\footnote{\label{proofB}The proof is found in Appendix B}
\end{thm}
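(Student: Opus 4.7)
The plan is to translate both sides of the biconditional into a single statement about independent sets in $\cg_\empmod$ via Lemma 1, and then note that the claim reduces to an immediate counting observation.

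First, I would unpack the definition of logical contextuality. An empirical model $\empmod$ fails to be logically contextual precisely when, for every possible event $e \in E(C)$ (i.e., every vertex $e$ of $\cg_\empmod$), there exists a canonical hidden variable $\lambda : \cm \to \co$ such that (i) $\lambda$ is compatible with $e$, meaning $\lambda|_C = e$, and (ii) for every context $C' \in \cc$, the coarse-graining $\lambda|_{C'}$ is a possible event, i.e.\ a vertex of $\cg_\empmod$. So I need to characterise, for each vertex $e$, when such a $\lambda$ exists.

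Here Lemma 1 does the work: canonical hidden variables $\lambda$ are in bijection with size-$|\cc|$ independent sets $I_\lambda = \{\lambda|_C : C \in \cc\}$ of $\cg(\cm,\cc)$. Under this bijection, compatibility $\lambda|_C = e$ is exactly the condition $e \in I_\lambda$, and the condition that every $\lambda|_{C'}$ is a possible event is exactly the condition $I_\lambda \subseteq V(\cg_\empmod)$. Because $\cg_\empmod$ is the induced subgraph of $\cg(\cm,\cc)$ on possible events, such an $I_\lambda$ is precisely an independent set of size $|\cc|$ in $\cg_\empmod$ that contains $e$. Conversely, any independent set of size $|\cc|$ in $\cg_\empmod$ containing $e$ is, viewed in $\cg(\cm,\cc)$, a size-$|\cc|$ independent set containing $e$, hence by Lemma 1 arises from a unique $\lambda$ compatible with $e$ and whose coarse-grainings are all possible. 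So a vertex $e$ fails to witness logical contextuality if and only if its independence degree in $\cg_\empmod$ is at least $|\cc|$.

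The proof then closes with a simple bounding observation already flagged in the discussion of Lemma 1: the observable events in a single context form a clique, so no independent set in $\cg(\cm,\cc)$ (and hence none in $\cg_\empmod$) can have more than $|\cc|$ vertices. Therefore the independence degree of each vertex in $\cg_\empmod$ is at most $|\cc|$, and equals $|\cc|$ precisely when the vertex is not a witness of logical contextuality. Taking the minimum over all vertices, $\empmod$ is logically contextual iff some vertex has independence degree strictly less than $|\cc|$, iff the minimal independence number of $\cg_\empmod$ is strictly less than $|\cc|$.

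The only real bookkeeping obstacle will be the translation step: one must be explicit that the coarse-grainings $\lambda|_{C'}$ produced by Lemma 1 enumerate exactly the vertices of $I_\lambda$, so that "predicts an impossible event" and "$I_\lambda \not\subseteq V(\cg_\empmod)$" are logically equivalent rather than merely one implying the other. Everything else is a matter of assembling the bijection with the maximum-size constraint coming from the context cliques.
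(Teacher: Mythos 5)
Your proposal is correct and follows essentially the same route as the paper's Appendix~B proof: both rest on Lemma~1's bijection between canonical hidden variables and size-$|\cc|$ independent sets, identify compatibility with $e$ as membership $e \in I_\lambda$ and ``predicts no impossible event'' as $I_\lambda$ lying inside $\cg_\empmod$, and use the clique-per-context bound to conclude that logical contextuality is equivalent to some vertex having independence degree below $|\cc|$. The only difference is presentational: you phrase the argument as one chain of equivalences, whereas the paper splits it into the two directions of the biconditional.
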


In Figure 1, that there is no independent set of 4 circular vertices that includes the top-left corner vertex proves that Hardy's model is logically nonlocal

\section{Maximal nonlocality \& strong contextuality}The notion of \emph{maximal nonlocality} was introduced by Elitzur-Popescu-Rohrlich \cite{eprohr} for states in a standard Bell experiment and extended to more general nonlocality experiments by Barrett \emph{et al.} \cite{barrettmax}.  AB generalized this notion to maximal contextuality and gave an equivalent logical notion called \emph{strong contextuality} which we exploit to give a simple graph-theoretic criterion.

Well-known examples of maximally nonlocal empirical models include GHZ states and PR boxes.  Access to maximally nonlocal resources yields advantages for communication tasks.  For example, sharing an unlimited number of PR boxes between two parties renders all communication complexity problems trivial \cite{vandamthesis}. 

Strong contextuality is known to play a role in quantum computation:  Raussendorf \cite{raussendorf} showed that strong contextuality is necessary for allowing certain MBQCs to deterministically compute nonlinear functions.  

An  empirical model $\empmod$ can be convexly decomposed into a noncontextual part $A$ and a nonsignalling part $Z$:
$$\empmod = \tau A + (1-\tau)Z.$$
An empirical model is \textit{maximally contextual} if admits no decomposition with nonzero $\tau$.  

This is equivalent to strong contextuality: there is no canonical hidden variable $\lambda: \cm \to \co$ such that the observable event $\lambda|_C$ is possible for all contexts $C \in \cc$.  An empirical model is strongly contextual precisely when it maximally violates (by 1) the logical Bell inequality yielded by choosing $E_i$ to be its possible events in $E(C_i)$.  Strongly contextual empirical models are precisely the convex sums of the nonsignalling polytope's contextual vertices \cite{acin, posspoly}.

\begin{thm}
An empirical model $\empmod$ is strongly nonlocal or contextual if and only if the independence number of its support graph $\cg_\empmod$ is less than the number of contexts.\footnotemark[\getrefnumber{proofB}]
\end{thm}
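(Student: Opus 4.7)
The plan is to deduce this directly from Lemma 1 together with the definition of strong contextuality. Recall that strong contextuality of $\empmod$ says there is no canonical hidden variable $\lambda: \cm \to \co$ with $\lambda|_C$ possible (i.e.\ $\empmod_C(\lambda|_C) > 0$) for every context $C \in \cc$. So the negation to be matched with the graph condition is: there exists some $\lambda$ all of whose coarse-grainings $\lambda|_C$ are possible events.

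First I would observe that under the bijection of Lemma 1, the independent set $I_\lambda = \{\lambda|_C : C \in \cc\} \subset \cg(\cm,\cc)$ associated to a canonical hidden variable $\lambda$ consists entirely of possible events if and only if $I_\lambda$ lies inside the induced subgraph $\cg_\empmod$. Hence $\lambda$ witnesses the failure of strong contextuality iff $I_\lambda$ is an independent set of size $|\cc|$ in $\cg_\empmod$. Conversely, any independent set of size $|\cc|$ in $\cg_\empmod$ is in particular an independent set of that size in $\cg(\cm,\cc)$, and so by Lemma 1 arises from a unique $\lambda$ whose coarse-grainings are all possible. This gives the bijection between hidden variables compatible with $\empmod$ and size-$|\cc|$ independent sets in $\cg_\empmod$.

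Next I would record the trivial upper bound $\alpha(\cg_\empmod) \leq |\cc|$: the observable events of any fixed context $C$ form a clique in $\cg(\cm,\cc)$ (any two distinct $e_1, e_2 : C \to \co$ disagree on some $M \in C$, hence are inconsistent), these cliques cover all vertices of $\cg_\empmod$, and an independent set can contain at most one vertex from each clique. Combining this with the bijection: $\empmod$ is not strongly contextual iff $\alpha(\cg_\empmod) \geq |\cc|$ iff $\alpha(\cg_\empmod) = |\cc|$. Contrapositively, $\empmod$ is strongly contextual iff $\alpha(\cg_\empmod) < |\cc|$, which is the theorem.

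There is no real obstacle here beyond carefully invoking Lemma 1 and the observation that contexts induce cliques; the work has essentially been done already in setting up the correspondence between canonical hidden variables and maximum-size independent sets. The only point deserving mild care is to note that the bijection of Lemma 1 restricts cleanly to $\cg_\empmod$ precisely because $\cg_\empmod$ is an \emph{induced} subgraph, so adjacency is inherited and independent sets in $\cg_\empmod$ remain independent in $\cg(\cm,\cc)$.
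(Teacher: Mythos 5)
Your proposal is correct and follows essentially the same route as the paper's proof: both directions rest on the Lemma 1 correspondence between canonical hidden variables $\lambda$ and size-$|\cc|$ independent sets $I_\lambda$, together with the observation that contexts induce cliques so that such a set meets each context exactly once. Your extra remarks (that the correspondence restricts cleanly to the induced subgraph $\cg_\empmod$ and that $\alpha(\cg_\empmod)\leq|\cc|$ always holds) are just slightly more explicit versions of steps the paper leaves implicit.
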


In Figure 1, that there is no independent set of 4 grey vertices  proves that the PR box is strongly nonlocal. 

\section{Applications}
We provide computational examples wherein we apply Theorems 2 and 3 to study the contextuality of quantum states relative to all stabilizer operations \cite{gotthesis}.

Recent work of Howard \emph{et al.} \cite{howard} has shown contextuality with respect to two-qudit stabilizer operations to be a necessary condition for MSD in odd-prime-power qudit systems.  Specifically, they expressed the inequalities describing the faces of the polytope of quantum states with positive Wigner function (in the Gross representation \cite{gross2006hudson}) as CSW inequalities on the orthogonality graph of two-qudit stabilizer projectors.  As a classically efficient simulation algorithm is known for stabilizer quantum theory augmented with non-stabilizer states having positive Wigner representation \cite{veitch}, any quantum state that is noncontextual with respect to stabilizer projectors cannot serve as a resource to promote stabilizer operations to super-classical computational power.

\subsection{Qutrit stabilizer operations}

In the odd-prime-power qudit cases, where contextuality is a useful necessary criterion for identifying MSD resource states, we ask whether stronger forms of contextuality are also useful.  Following Howard \emph{et al.}, we consider contextuality with respect to two-qutrit stabilizer operations.

\begin{result}
The states $\ket{M} \otimes \ket{M}$, where $\ket{M}$ is the qutrit magic state $\ket{M_0}$, $\ket{E}$, $\ket{N'}$, or $\ket{\psi_{U_v}}$ \cite{anwar,hillary, jiri} are neither logically contextual nor strongly contextual with respect to two-qutrit stabilizer operations.
\end{result}

Therefore, logical contextuality of $\ket{M} \otimes \ket{M}$ is not a necessary condition for a qutrit state $\ket{M}$ to serve as a resource promoting stabilizer operations to universal QC.

\begin{result}
The two-qutrit magic state $$\ket{CS}= \sum_{j,k = 0}^2 e^{\frac{2 \Pi i}{3} j k^2} \ket{j} \otimes \ket{k}$$ \cite{markpc} is strongly contextual with respect to two-qutrit stabilizer operations.
\end{result}

This suggests that stronger forms of contextuality may play a role in identifying higher-qudit magic states; we leave further investigation of this to future work.  As $\ket{CS}$ deterministically yields a non-Clifford gate, this result strengthens the connection between strong contextuality and advantage in deterministic computation and communication tasks that is suggested by the essential role strong contextuality plays in MBQC \cite{anders,raussendorf} and zero-error classical communication \cite{cubitt}.  It seems reasonable to conjecture that any magic state exhibiting such a deterministic property must be strongly contextual.

The support graphs of the aforementioned states are induced subgraphs of the orthogonality graph of two-qutrit stabilizer projectors which was was computed using Gross' discrete phase space \cite{gross2006hudson}.  The phase space for a two-qutrit system is given by $V = (\mathbb{Z}_3 \times \mathbb{Z}_3) \times (\mathbb{Z}_3 \times \mathbb{Z}_3)$: intuitively, a position and momentum variable for each qutrit.  The contexts are in correspondence with the Lagrangian subspaces of $V$ whereas the stabilizer states can be computed from a Lagrangian subspace $M \subset V$ and a phase point $v \in V$ using \cite[Lemma 8]{gross2006hudson}.  The Lagrangian subspaces are easily enumerated by finding all linearly independent pairs of vectors and eliminating those with nonzero symplectic product.  Some pairs will generate the same subspace; this redundancy has to be eliminated.

To reach Result 1, the minimal independence numbers of the support graphs were computed to be equal to the number of contexts (40) and thus, by Theorem 2, the states are not logically contextual and thus not strongly contextual.  To reach Result 2, the independence number of the support graph of $\ket{CS}$ was computed and found to be 34.  Thus, by Theorem 3, $\ket{CS}$ is strongly contextual.

\subsection{Qubit stabilizer operations}
It is impossible to use mere contextuality as a sufficient criterion for determining which qubit states are suitable for MSD as the Peres-Mermin argument\footnotemark[\getrefnumber{proofB}] \cite{mermin} identifies all two-qubit states as contextual with respect to two-qubit stabilizer operations.  Howard \emph{et al.} posed as an open problem whether a quantification of contextuality could serve as a better criterion.  A close analysis of the Peres-Mermin argument shows that it, in fact, identifies all two-qubit states as \emph{strongly} contextual with respect to two-qubit stabilizer operations.  Thus, we suggest a negative answer to Howard \emph{et al.}'s question as such a quantification would need to discriminate between states which are all maximally contextual.  While it may still be possible to use contextuality as a criterion in identifying qubit MSD resources, it seems that such a measure will have to be specialized to the setting in question.   An interesting workaround to state-independent contextuality in qubit MSD is provided by \cite{raussrebit}.  

Howard \emph{et al.} describe noncontextuality with respect to all stabilizer operations in the sense of obeying all CSW inequalities on the orthogonality graph of stabilizer projectors.  In order to apply the AB contextuality criteria, it is necessary to describe this with a physical setting with a measurement scenario.  We chose as the set of measurements $\cm$ the 15 possible tensor pairs of $I,X,Y,Z$ excluding $I \otimes I$.  The contexts are the 15 maximal commuting subsets (of size 3) of these measurements.

Constructing the graph $\cg(\cm,\cc)$, we see that it is made up of 15 cliques of $2^3$ formal events: 120 formal observable events.  However, there are only 60 two-qubit stabilizer states.  The resolution of this puzzle is that half of the formal events represent joint outcomes which are actually impossible in quantum theory as they do not respect the algebraic relationships between the quantum measurements.

\begin{result}
All $n$-qubit states are strongly contextual with respect to $n$-qubit stabilizer operations whenever $n > 1$.\footnotemark[\getrefnumber{proofB}] 
\end{result}

Consider n = 2.  The exclusivity graph $\cg$ has as vertices the rank-1 two-qubit stabilizer projectors; edges connect orthogonal projectors.  We compute $\alpha(\cg) = 12$, agreeing with \cite[Table 3]{howard2013quantum}, and note that $\alpha(\cg_\rho) \leq \alpha(\cg)$ for the empirical model arising from any quantum state $\rho$.  Since the number of contexts is 15, it follows from Theorem 3 that $\rho$ is strongly contextual.  For $n > 2$, an analytic proof that directly extends the Peres-Mermin argument is given in Appendix B.

The above proof applies Theorem 3 for each quantum state individually by noting that the support graph of any quantum state is a subgraph of the support graph of the maximally mixed state.  Generalizing, we see that \emph{state-independent strong contextuality} will occur whenever the independence number of the orthogonality graph of a quantum measurement scenario is less than the number of contexts.  AB identified the connection between Kochen-Specker-type configurations of vectors and generic strong contextuality; the arguments of \cite[\S 7.1, \S 9.2]{AB} or the theory of generalized all-versus-nothing arguments \cite[\S 4.2]{ccp} can also be used to reach Result 3.


\section{Conclusions}
The above examples clarify the relationship between the logical strength of contextual resources and computational power.  We intend to further investigate the strength of contextuaity of higher-qudit magic states.  Further development will also require considering other measures of contextuality and investigating the relationship between measures of contextuality and other models of quantum computation.  Another important future direction is to clarify the relationship between the above described notion of contextuality with Spekkens' \cite{liang2011specker,spekkens} notion.

We have shown how to synthesize insights and tools from both the Cabello-Severini-Winter and Abramsky-Brandenburger approaches for understanding contextuality of operational data for very general experimental settings.  We repeatedly exploited the correspondence between canonical hidden variables for $(\cm,\cc)$ and independent sets of maximal size of $\cg(\cm,\cc)$ which relies on the special structure of exclusivity graphs arising from measurement scenarios.  An interesting future direction is understanding arbitrary graphs as \emph{contextual sample spaces}.  It will be necessary to understand how to complete an abstract graph to one arising from a measurement scenario.  Perhaps additional structure on the graph will be necessary for this.  For example, it may be necessary to fix a distinguished vertex cover by disjoint cliques indicating events of a common context; more general hypergraph-theoretic structure \cite{acin} might also be required.





\begin{acknowledgements}
The author wishes to thank Eric Cavalcanti, Samson Abramsky, Simone Severini, Rui Soares Barbosa, Kohei Kishida, Dan Browne, Shane Mansfield, Andrew Simmons, Mark Howard, and Ana Bel\'en Sainz for many helpful discussions. This research was financially supported by the EPSRC via grant EP/N017935/1 (Contextuality as a resource in quantum computation). This work was partially completed at the Simons Institute for the Theory 
of Computing (supported by the Simons Foundation) at 
the University of California, Berkeley, during the Logical Structures in Computation programme. 
\end{acknowledgements}


\bibliographystyle{apsrev4-1}
\bibliography{contextuality}

\begin{appendices}

\section{Background}

\subsection{Graph theory}

A (simple, undirected) \emph{graph} $G$ is a set of vertices $V = \{v_1, ..., v_n\}$, pairs of which may be adjacent, i.e joined by an edge.  The edges are represented by a set $E \subset V \times V$ such that $(v,v) \notin E$ for any vertex $v$ and such that $(v,v') \in E$ if and only if $(v',v) \in E$.  That is, no edge joins a vertex to itself and $v$ and $v'$ are adjacent whenever $v'$ and $v$ are.  Given a subset $V' \subset V$ of vertices, the \emph{induced subgraph} has $V'$ as vertices and retains all edges $(v,v') \in E$ with $v,v' \in V'$.

A weighted graph $(G,w)$ is a graph $G$ together with a function $w: V \to \mathbb{R}^{\geq 0}$ from vertices to non-negative reals.  We denote the weight $w(v_i)$ of a vertex $v_i$ by $w_i$.

An \emph{independent (vertex) set} is a subset $I \subset V$ such that no two vertices in $I$ are adjacent: $(v, v') \notin E$ whenever $v, v' \in I$.  A \emph{clique} is a subset $C \subset V$ such that every two distinct vertices in $C$ are adjacent: $(v, v') \in E$ whenever $v, v' \in C$  and $v \neq v'$.

The \emph{independence number} of a graph $G$ is the size of the largest independent set in $G$; in terms of Definition 3, it is the maximal independence degree over all vertices.  The independence number of a weighted graph $(G,w)$ is the maximum value of the sum $\sum_{i \in I} w_i$ where $I$ is any independent set.

To define the \emph{Lov\'asz theta number} $\vartheta(G,w)$ of a weighted graph $(G,w)$ we must first define an orthonormal representation of a graph $G$.  This is a choice of a unit vector $\ket{\phi} \in \mathbb{R}^d$ and an assignment to each vertex $v_i$ a unit vector $\ket{\psi_i} \in \mathbb{R}^d$ such that $\braket{\psi_i|\psi_j} = 0$  whenever $v_i$ and $v_j$ are adjacent.  The choice of dimension $d$ can be arbitrary.  The Lov\'asz theta number $\vartheta(G,w)$ is then defined to be the maximum $\sum_{i \in V} w_i |\braket{\psi_i|\phi}|^2$ over all possible orthonormal representations of $G$.

The \emph{fractional packing number} $\alpha^*(G,w)$ is the maximum possible value of the sum $\sum_{i \in V} p_i w_i$ where $\{p_1,...,p_n\}$ is a choice of a non-negative real number for each vertex such that $\sum_{i \in C} p_i \leq 1$ for every clique $C \subset V$.

\section{Proofs}

\setcounter{thm}{0}

\begin{thm}
Logical Bell inequalities are derived from CSW inequalities on the exclusivity graph $\cg(\cm, \cc)$.
\end{thm}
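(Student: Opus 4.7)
The plan is to exhibit each logical Bell inequality as a CSW inequality on $\cg(\cm,\cc)$ with $0$/$1$ weights, and then show that the weighted independence number bound automatically matches (and in fact refines) the Abramsky--Hardy bound $N-1$, where $N = |\cc|$.

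First I would set up the weighting. Given a logical Bell inequality built from subsets $E_i \subset E(C_i)$ for each context $C_i$ with $f_i = \bigvee_{e \in E_i} f(e)$, define $w: V(\cg(\cm,\cc)) \to \{0,1\}$ by $w_e = 1$ iff $e \in \bigcup_i E_i$. Since any two distinct observable events from a common context $C_i$ differ on some measurement in $C_i$, they are adjacent in $\cg(\cm,\cc)$; hence the events of $E_i$ are pairwise mutually exclusive, so $P(f_i) = \sum_{e \in E_i} \empmod_{C_i}(e)$. Summing over $i$ identifies the left-hand side of the logical Bell inequality with the CSW sum $\sum_e w_e\, p_e$.

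Second, and the key step, I would show that logical inconsistency of the $f_i$ forces $\alpha(\cg(\cm,\cc), w) \leq N-1$. Any independent set $I$ contains at most one vertex per context-clique, so $|I| \leq N$, and since $w_e \in \{0,1\}$ the weighted sum on $I$ is at most $|I| \leq N$. Suppose for contradiction that this weighted sum equals $N$. Then $|I| = N$ (saturating the one-per-context bound) and every vertex of $I$ lies in $\bigcup_i E_i$; by Lemma 1, $I$ corresponds to a canonical hidden variable $\lambda_I : \cm \to \co$ satisfying $\lambda_I|_{C_i} \in E_i$ for every $i$. But this $\lambda_I$ then renders each $f_i$ \emph{true} simultaneously, contradicting the inconsistency hypothesis on $\{f_i\}$. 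Hence $\alpha(\cg(\cm,\cc), w) \leq N-1$, and the CSW inequality $\sum_e w_e\, p_e \leq \alpha(\cg(\cm,\cc), w)$ therefore implies $\sum_i P(f_i) \leq N-1$, which is the logical Bell inequality.

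The main obstacle is the second step, specifically the use of Lemma 1 to upgrade a combinatorial obstruction (no independent set of size $N$ lies inside $\bigcup_i E_i$) into a semantic obstruction (no truth assignment satisfies all $f_i$). The other subtleties are minor bookkeeping: one should note that the CSW bound is tight while the Abramsky--Hardy bound $N-1$ need not be, so the CSW formulation is in fact strictly stronger in general; this is consistent with the remark immediately after the theorem statement about extended logical Bell inequalities forming a complete but coarser class than the full family of CSW inequalities on $\cg(\cm,\cc)$.
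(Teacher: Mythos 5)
Your proposal is correct and follows essentially the same route as the paper: assign weight $1$ to the events of $\bigcup_i E_i$ and $0$ elsewhere, then argue that a weight-$N$ independent set would (via the correspondence of Lemma 1) produce a truth assignment satisfying all the $f_i$, contradicting their inconsistency, so $\alpha(\cg(\cm,\cc),w) \leq N-1$. You merely make explicit two details the paper leaves implicit---that events within a context form a clique so $P(f_i)$ is the sum of the $\empmod_{C_i}(e)$, and the explicit invocation of Lemma 1---which is a welcome but not substantively different elaboration.
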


\begin{proof}We are given $E_i \subset E(C_i)$ such that the formulae $f_i$ are logically inconsistent.  The corresponding logical Bell inequality is:
$$ \sum_{i = 1,...,N} \sum_{j} p_{ij} \leq N - 1$$
where $N$ is the number of contexts and $p_{ij}$ is the probability of the $j$-th event in $E_i$.
Consider the weighted graph $(G,w)$ with weights 1 for events in $E_i$ and 0 for all others.  The corresponding CSW inequality is:
$$ \sum_{i = 1,...,N} \sum_{j} p_{ij} \leq \alpha(G,w) \, .$$
The quantity $\alpha(G,w)$ is simply the size of the largest independent set of events in $\cup E_i$.  This integer must be strictly less than $N$ for else these events together yield an assignment of \emph{true} or \emph{false} to all the variables in $\cm$ which makes all the formulae $f_i$ true, contradicting their assumed logical inconsistency.\end{proof}

\begin{thm}
\label{hardy}
An empirical model $\empmod$ is logically nonlocal or contextual (i.e. admits an inequality-free proof of nonlocality or contextuality) if and only if the minimal independence number of its support graph $\cg_\empmod$ is less than the number of contexts.
\end{thm}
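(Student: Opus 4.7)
The plan is to unpack the sheaf-theoretic definition of logical contextuality, apply Lemma 1 to trade canonical hidden variables for maximum-size independent sets, and verify that the existential quantifier in the definition translates exactly into the graph invariant of Definition 3. Throughout I write $N = |\cc|$ for the number of contexts.

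First I would fix a possible event $e: C \to \co$, corresponding to a vertex $v_e$ of the support graph $\cg_\empmod$, and characterize exactly when $e$ witnesses logical contextuality of $\empmod$. By Lemma 1, canonical hidden variables $\lambda: \cm \to \co$ are in bijection with size-$N$ independent sets $I_\lambda = \{\lambda|_{C'} : C' \in \cc\}$ of $\cg(\cm,\cc)$. Under this bijection, $\lambda$ is compatible with $e$ (i.e.\ $\lambda|_C = e$) precisely when $v_e \in I_\lambda$, and $\lambda$ predicts only possible events of $\empmod$ precisely when $I_\lambda \subseteq V(\cg_\empmod)$, since the coarse-grainings $\lambda|_{C'}$ are exactly the observable events $\lambda$ predicts. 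Thus $e$ witnesses logical contextuality iff no size-$N$ independent set of $\cg(\cm,\cc)$ that contains $v_e$ lies inside $V(\cg_\empmod)$.

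Next I would show that a size-$N$ independent set of the induced subgraph $\cg_\empmod$ is the same data as a size-$N$ independent set of $\cg(\cm,\cc)$ that happens to sit inside $V(\cg_\empmod)$. One direction is immediate because $\cg_\empmod$ is induced. For the other direction, any independent set of $\cg_\empmod$ of size $N$ must, by the pigeonhole argument supporting Lemma 1 (observable events of a common context form a clique, and there are $N$ contexts), contain exactly one vertex from each per-context clique, and these vertices remain pairwise non-adjacent in the ambient graph. So the condition pinned down in the previous paragraph reduces to: no size-$N$ independent set of $\cg_\empmod$ contains $v_e$, i.e.\ the independence degree of $v_e$ in $\cg_\empmod$ is strictly less than $N$.

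Finally, logical contextuality asserts the \emph{existence} of some possible event $e$ with this property, which is exactly the condition that some vertex of $\cg_\empmod$ has independence degree $<N$, i.e.\ the minimal independence number of $\cg_\empmod$ is strictly less than $N$. This yields the desired biconditional. The only delicate step, and the one requiring a bit of care to state cleanly, is the ``upgrade'' identifying maximum-size independent sets of $\cg_\empmod$ with those of $\cg(\cm,\cc)$ sitting inside $V(\cg_\empmod)$; everything else is a direct translation of definitions through Lemma 1, so the proof should be short once the dictionary is laid out.
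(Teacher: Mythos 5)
Your proof is correct and takes essentially the same route as the paper's: both rest on Lemma 1's bijection between canonical hidden variables and size-$|\cc|$ independent sets, and both translate ``every hidden variable compatible with $e$ predicts an impossible event'' into ``no maximum independent set containing $v_e$ lies inside the support graph,'' which is exactly the independence-degree condition. The only cosmetic difference is that you organize the argument as a single chain of equivalences (with the induced-subgraph identification made explicit), whereas the paper argues the two directions separately.
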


Suppose an empirical model $\empmod$ gives rise to a support graph $\cg_\empmod$ whose minimal independence number is less than $|\cc|$.  This means that there is a possible event $e: C \to \co$ with independence degree less than $|\cc|$, the number of contexts.  If $\lambda: \cm \to \co$ is a hidden variable compatible with $e$ (that is, $\lambda|_C = e$), then the set of events $I_\lambda = \{\lambda|_C: C \to \co \;|\; C \in \cc\}$ is an independent subset of $\cg(\cm,\cc)$ of size $|\cc|$.  One of these events cannot be in the support graph $\cg_\empmod$ and is therefore impossible.  As there is a possible event $e$ such that every compatible hidden variable must predict an impossible event, $\empmod$ is logically contextual.

Conversely, suppose an empirical model $\empmod$ gives rise to a support graph $\cg_\empmod$ whose minimal independence number is $|\cc|$ and $e$ is a possible event.  The event $e$ must be contained in an independent subset $I$ of $\cg_\empmod$ of size $|\cc|$.  Construct the hidden variable $\lambda_I: \cm \to \co$ by $\lambda(M) = i(M)$ for some $i \in I$ whose context contains $M$.  Every coarse-graining of $\lambda_I$ is in $I \subset \cg_\empmod$ and therefore possible.  So, $\lambda_I$ is a canonical hidden variable that is compatible with $e$ that does not predict an impossible event.  We conclude that  $\empmod$ is not logically contextual.

\begin{thm}
An empirical model $\empmod$ is strongly nonlocal or contextual if and only if the independence number of its support graph $\cg_\empmod$ is less than the number of contexts.
\end{thm}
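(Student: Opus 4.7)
The plan is to mirror the argument for Theorem 2 (logical contextuality), replacing the minimal independence number with the ordinary independence number, and replacing the existence of a single possible event whose extensions must misfire with the existence of \emph{any} canonical hidden variable whose coarse-grainings are all possible. The key translation tool is again Lemma 1, which identifies canonical hidden variables $\lambda: \cm \to \co$ with independent sets of size $|\cc|$ in $\cg(\cm,\cc)$; since observable events of a common context form a clique, $|\cc|$ is in fact the maximum possible size, so $\alpha(\cg(\cm,\cc)) \leq |\cc|$. The support graph $\cg_\empmod$ is an induced subgraph, so any independent set of $\cg_\empmod$ is an independent set of $\cg(\cm,\cc)$, and hence $\alpha(\cg_\empmod) \leq |\cc|$ as well, with equality precisely when some size-$|\cc|$ independent set of $\cg(\cm,\cc)$ is contained in the possible events.

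First I would restate strong contextuality in its logical form: $\empmod$ is strongly contextual iff there is no canonical hidden variable $\lambda: \cm \to \co$ such that $\lambda|_C$ is possible (i.e.\ an event of $\cg_\empmod$) for every context $C \in \cc$. Equivalently, using the map $\lambda \mapsto I_\lambda = \{\lambda|_C \mid C \in \cc\}$ from Lemma 1, $\empmod$ is strongly contextual iff no $I_\lambda$ is a subset of $\cg_\empmod$.

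For the ``if'' direction, assume $\alpha(\cg_\empmod) < |\cc|$. Suppose for contradiction some canonical hidden variable $\lambda$ made every $\lambda|_C$ possible. Then $I_\lambda \subset \cg_\empmod$ is an independent set of size $|\cc|$ in $\cg_\empmod$, contradicting the bound on $\alpha(\cg_\empmod)$. Hence $\empmod$ is strongly contextual. For the converse, assume $\alpha(\cg_\empmod) = |\cc|$, and pick an independent set $I \subset \cg_\empmod$ with $|I| = |\cc|$. Since $I$ is also an independent set of $\cg(\cm,\cc)$ of maximal size, Lemma 1 produces a canonical hidden variable $\lambda_I: \cm \to \co$ whose coarse-grainings are precisely the events in $I$. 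All these events lie in $\cg_\empmod$ and are therefore possible, so $\lambda_I$ witnesses that $\empmod$ is not strongly contextual.

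I do not anticipate a serious obstacle: the entire argument is a clean application of Lemma 1 together with the clique-partition observation about $\cg(\cm,\cc)$. The only subtlety worth explicitly noting is that ``independent set of size $|\cc|$'' is the right notion on both sides of the equivalence -- one must check that such a set contains exactly one event per context (so that Lemma 1's construction $\lambda_I(M) = e(M)$ is well-defined regardless of which $e \in I$ containing $M$ is chosen), which follows from the clique structure of each context's observable events. Once this is in place, the theorem follows in a few lines parallel to the proof of Theorem 2.
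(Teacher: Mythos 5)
Your proof is correct and follows essentially the same route as the paper's: both directions hinge on the Lemma 1 correspondence between canonical hidden variables and size-$|\cc|$ independent sets, with the witness $\lambda_I$ built from a maximal independent set of possible events in one direction and the set $I_\lambda$ of coarse-grainings forced to contain an impossible event in the other. Your added remarks (that $\alpha(\cg_\empmod) \leq |\cc|$ always holds and that a size-$|\cc|$ independent set meets each context-clique exactly once) are correct and only make explicit what the paper leaves implicit.
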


If the support graph $\cg_\empmod$ of an empirical model $\empmod$ contains an independent set $I$ of size $|\cc|$, then, as above, construct the canonical hidden variable $\lambda_I$.  Each coarse-graining of $\lambda_I$ is in  $I \subset \cg_\empmod$ and thus, is possible.  Therefore, $\empmod$ not is strongly nonlocal or contextual.

Conversely, suppose every independent subset of the support graph $\cg_\empmod$ has size less than $|\cc|$.  If $\lambda: \cm \to \co$ is any canonical hidden variable, then the set $I_\lambda$ of its coarse-grainings $\lambda|_C$ form an independent subset of $\cg(\cm,\cc)$ of size $|\cc|$ and thus at least one $i \in I$ is impossible.  So, $\empmod$ is strongly nonlocal or contextual.

\setcounter{result}{3}
\begin{result}
All $n$-qubit states are strongly contextual with respect to $n$-qubit stabilizer operations whenever $n > 1$.
\end{result}

Consider the measurement scenario consisting of those $n$-qubit Pauli operators whose square is the identity as measurements and, as contexts, those subsets which generate maximal abelian subgroups of the $n$-qubit Pauli group $P_n$.  The support graph of the maximally mixed state, i.e. those formal events which are possible within quantum theory, is the orthogonality graph of $n$-qubit stabilizer quantum mechanics.  It has rank-1 stabilizer projections as vertices with edges joining orthogonal projections.  We must show that the independence number of this graph is less than the number of contexts $|C_n|$.

Suppose for contradiction that $I$ is an independent vertex subset of size $|C_n|$.  $I$ must contain precisely one projector from each context for otherwise it will contain projectors onto two distinct eigenstates for the Paulis of the context which must be orthogonal and thus adjacent.  We denote the projector chosen from a context $C$ by $I_C$.

Thus, $I$ yields a noncontextual value assignment $v: P_n \to \{\pm 1, \pm i\}$.    For an $n$-Pauli $P \in P_n$ in a context $C \in C_n$, $v(P)$ is eigenvalue associated by $P$ to the state $I_C$.  This eigenvalue is independent of the choice of context for if $P$ is a member of both $C$ and $C'$, then it associates the same eigenvalue to both $I_C$ and $I_C'$; otherwise, $I_C$ and $I_C'$ belong to different eigenspaces of $P$ which contradicts the assumption that no pair from $I$ is orthogonal.  We can extend $v$ to all Paulis by multiplicativity.  However, there exists no function $v: P_n \to \{\pm 1, \pm i\}$ mapping Paulis to one of their eigenvalues such that $v(-P) = -v(P)$ and $v(PP') = v(P)v(P')$ whenever $P$ and $P'$ commute.

We first reproduce the standard Mermin argument for 2-qubit systems.  Consider the table of Paulis:

\begin{table}[h]
\centering
\begin{tabular}{|c|c|c|}
\hline
$Z \otimes I$ & $I \otimes X$ & $Z \otimes X$ \\ \hline
$I \otimes Z$ & $X \otimes I$ & $X \otimes Z$ \\ \hline
$Z \otimes Z$ & $X \otimes X$ & $Y \otimes Y$ \\ \hline
\end{tabular}
\end{table}

Each entry has eigenvalues in the set $\{\pm 1\}$.  The rows and columns give commuting triples of $P_2$.  Suppose a valuation $v$ satisying our hypotheses exists and consider the product of $v(P)$ over all $P$ in each row or column.  Since each entry appears in one row and one column, this is the product of $(\pm 1)^2$ nine times: 1.  However, if we collect the terms into rows and columns, we find that we are taking the product of $v(I \otimes I)$ five times with $v(-I \otimes I)$ once.  Therefore, the product is $-1$ and we reach a contradiction.

For the general $n > 2$ case, we repeat the same argument only we tensor all of the entries of the table with $I^{\otimes (n-2)}$ to get entries in $P_n$.  This does not change the eigenvalues of the entries as $I$ has only eigenvalue 1 nor does it affect the commutativity of the rows and columns.  The product of the rows and columns yields $I^{\otimes n}$ five times and $-I^{\otimes n}$ once.

\section{Relation to hypergraph approach}

Here, we show that the graph of Definition 1 coincides with the complement of the non-orthogonality graph [36, 2.3.1]
of the hypergraph [36, D.1.4]
associated by Ac\'in \emph{et al.} to a measurement scenario.

We first recall their relevant definitions before sketching a proof of the result.
\setcounter{defn}{3}
\begin{defn}[D.1.2]
For a measurement $A \in \cm$ in a measurement scenario $(\cm, \cc)$, the \emph{induced measurement scenario} $\cm\{A\}$ has as measurements all $B \in \cm$ such that $A \neq B$ and $\{A,B\}$ is contained within some context $C \in \cc$.  The contexts are the sets $C \setminus \{A\}$ for all $C \in \cc$ such that $A \in C$.
\end{defn}

\begin{defn}[D.1.3]
A \emph{measurement protocol} $T \in MP(T)$ on a measurement scenario $(\cm, \cc)$ is defined recursively as:
\begin{itemize}
\item Base case: $\cm = \emptyset$ \\ $T = \emptyset$
\item Recursive case: $\cm \neq \emptyset$ \\ $T = (A, f)$, where $A \in \cm$ is a measurement and $f: \co \to MP(\cm\{A\})$ is a function from the outcome set to the set of all measurement protocols on $\cm\{A\}$.
\end{itemize}
\end{defn}

\begin{defn}
A \emph{protocol outcome} $\alpha \in POut(T)$ of a measurement protocol $T$ is defined recursively as:
\begin{itemize}
\item Base case: $T = \emptyset$ \\ $\alpha = \{\emptyset\}$
\item Recursive case: $T = (A,f)$ \\ $\alpha = (A,a,\alpha')$, where $a \in \co$ and $\alpha' \in POut(f(a))$ is a protocol outcome of the measurement protocol $f(a)$.
\end{itemize}
A protocol outcome $\alpha$ yields an observable event $s(\alpha): C(\alpha) \to \co$ where $C(\alpha) \in \cc$, also defined recursively, that maps $A$ to $a$ at each stage. 
\end{defn}

\begin{defn}[D.1.4]
The \emph{contextuality scenario} $H[\cm]$ associated to a measurement scenario $(\cm, \cc)$ is a hypergraph with all observable events (functions $e: C \to \co$ from some context $C \in \cc$ to outcomes) as vertices.  Every measurement protocol $T$ on $(\cm, \cc)$ defines a hyperedge in the hypergraph given by $\{s(\alpha) :  \alpha \in POut(T)\}$.
\end{defn}

\begin{defn}[2.3.1]
The \emph{non-orthogonality graph} of a contextuality scenario $H[\cm]$ has the same vertices.  Two vertices share an edge if and only if they are not contained within a common hyperedge.
\end{defn}

\setcounter{lemma}{1}
\begin{lemma}
Let $(\cm, \cc)$ be a measurement scenario.  For any $A \in \cm$ and observable event $e: C \to \co$ where $A \in C \in \cc$, there exists a measurement protocol $T \in MP(\cm)$ such that $T = (A,f)$ and $e = s(\alpha)$ for some $\alpha \in POut(T)$.
\end{lemma}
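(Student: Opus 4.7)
The plan is to prove the lemma by induction on $|\cm|$: construct the measurement protocol so that its first measurement is the prescribed $A$, have the branch corresponding to the outcome $a := e(A)$ recursively realize $e|_{C \setminus \{A\}}$ inside the induced scenario with measurements $\cm\{A\}$ and contexts $\{C' \setminus \{A\} : C' \in \cc, A \in C'\}$, and fill in the other branches $o \neq a$ with arbitrary protocols. Filling the inessential branches is legitimate because $MP(\cm')$ is nonempty for every measurement scenario $\cm'$ (a trivial side induction: $\emptyset \in MP(\emptyset)$, and for nonempty $\cm'$ pick any measurement and any element of each $MP(\cm'\{\cdot\})$ supplied by the inner hypothesis).

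For the base case $C = \{A\}$, the maximality of contexts (the paper's condition that no proper subset of a context is itself a context) prevents any strictly larger context from containing $A$, so $\cm\{A\} = \emptyset$ and $MP(\cm\{A\}) = \{\emptyset\}$. Setting $f(o) = \emptyset$ for every $o \in \co$ yields $T = (A, f) \in MP(\cm)$ and $\alpha = (A, e(A), \emptyset) \in POut(T)$ with $s(\alpha) = e$. For the inductive step, suppose $|C| > 1$ and pick any $B \in C \setminus \{A\}$. Since $|\cm\{A\}| < |\cm|$, applying the induction hypothesis to the induced scenario, the measurement $B$, and the observable event $e|_{C \setminus \{A\}}$ on the context $C \setminus \{A\}$ produces $T' = (B, f') \in MP(\cm\{A\})$ and $\alpha' \in POut(T')$ with $s(\alpha') = e|_{C \setminus \{A\}}$. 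Define $f(e(A)) = T'$ and $f(o) = T_0$ for any fixed $T_0 \in MP(\cm\{A\})$ when $o \neq e(A)$. Then $T = (A, f) \in MP(\cm)$ and $\alpha = (A, e(A), \alpha') \in POut(T)$ satisfy $s(\alpha) = e$.

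The only delicate step is the base case, where we must verify $\cm\{A\} = \emptyset$ when $C = \{A\}$: if any context $C' \supsetneq \{A\}$ existed in $\cc$, then $\{A\}$ would be a proper subset of a context while itself being a context, violating the antichain requirement on $\cc$; hence no $B \neq A$ can lie in a common context with $A$, so $\cm\{A\} = \emptyset$ and the recursion terminates. Everything else is a mechanical unrolling of the recursive definitions of $MP$ and $POut$, so I do not anticipate any serious obstacle.
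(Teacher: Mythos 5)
Your proof is correct and follows essentially the same route as the paper's: recurse into the induced scenario $\cm\{A\}$ with the restricted event $e|_{C\setminus\{A\}}$, and let $f$ send $e(A)$ to the protocol produced by the recursive hypothesis. The paper's one-line argument leaves implicit the details you spell out (termination of the recursion via the antichain condition on contexts, and nonemptiness of $MP$ for the branches $o \neq e(A)$), but these are exactly the intended gaps and you fill them correctly.
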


\begin{proof}
This follows by applying the recursive hypothesis to $\cm\{A\}$, $A' \in (C \setminus A)$, and $e|_{(C \setminus A)} \to \co$ and choosing $f$ to map $e(A)$ to the resulting induced measurement protocol.
\end{proof}

\begin{thm}The exclusivity graph $\cg(\cm, \cc)$ of a measurement scenario $(\cm, \cc)$ is isomorphic to the graph complement of the non-orthogonality graph of $H[\cm]$.
\end{thm}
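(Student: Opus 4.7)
The plan is to verify the isomorphism by noting that both graphs share the same vertex set (observable events $e: C \to \co$ with $C \in \cc$) and then comparing their edge relations. An edge of $\cg(\cm, \cc)$ marks inconsistency of two distinct events, whereas an edge of the complement of the non-orthogonality graph of $H[\cm]$ marks that two events arise as $s(\alpha_1), s(\alpha_2)$ from outcomes of a common measurement protocol $T \in MP(\cm)$. The theorem thus reduces to showing that two distinct observable events are inconsistent if and only if they both occur as outcomes of some single measurement protocol.

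For the direction ``common hyperedge $\Rightarrow$ inconsistent'', I would argue by structural induction on $T$. In the recursive case $T = (A, f)$, two distinct protocol outcomes must either disagree on the root outcome of $A$---immediately producing an inconsistency witness in $C_1 \cap C_2$---or share it, in which case their tails are distinct outcomes of a common sub-protocol in $MP(\cm\{A\})$, and the induction hypothesis applies to the coarse-grainings $e_i|_{C_i \setminus A}$. The base case $\cm = \emptyset$ is vacuous, as the only outcome $\emptyset$ yields no distinct pair.

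For the direction ``inconsistent $\Rightarrow$ common hyperedge'', I would construct a protocol explicitly. Starting from an inconsistency witness $M \in C_1 \cap C_2$ with $a_i := e_i(M)$ and $a_1 \neq a_2$, I would build $T = (M, f)$ where $f(a_1)$ is a sub-protocol on $\cm\{M\}$ reaching $e_1|_{C_1 \setminus M}$ and $f(a_2)$ is a sub-protocol reaching $e_2|_{C_2 \setminus M}$. The existence of these sub-protocols is supplied by Lemma 2 applied within the induced scenario $\cm\{M\}$, in which $C_i \setminus M$ is a context by Definition 4; on the remaining outcomes $f$ may be extended arbitrarily. Both $e_1$ and $e_2$ then appear among $\{s(\alpha) : \alpha \in POut(T)\}$.

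I do not anticipate a substantive obstacle; the argument is essentially careful bookkeeping against the recursive definitions of measurement protocols and their outcomes. The only delicate case is the degenerate one in which some $C_i$ equals $\{M\}$: here $e_i|_{C_i \setminus M}$ is the empty event and the required sub-protocol must be the empty protocol. This turns out to be a valid member of $MP(\cm\{M\})$ precisely because the anti-chain condition on $\cc$ (``no proper subset of a context is a context'') forces $\cm\{M\} = \emptyset$ whenever $\{M\}$ is itself a context, so that the base case of the recursive definition of $MP$ applies.
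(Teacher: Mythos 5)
Your proposal is correct and follows essentially the same route as the paper's proof: identify the common vertex set, reduce the claim to ``inconsistent $\iff$ contained in a common hyperedge,'' construct the protocol $T=(M,f)$ via the auxiliary lemma for the forward direction, and locate the first point of divergence of two protocol outcomes for the converse (which you phrase as structural induction). Your explicit treatment of the degenerate case $C_i=\{M\}$ via the anti-chain condition is a small amount of extra care the paper omits, but it does not change the argument.
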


\begin{proof}
Suppose $e: C \to \co$ and  $e': C' \to \co$ are mutually exclusive events, i.e. there is a measurement $A \in C \cap C'$ such that $e(A) \neq e'(A)$.  We must show that they are contained within a common hyperedge by constructing a measurement protocol $T = (A,f)$ for which $e$ and $e'$ arise as $s(\alpha)$ and $s(\alpha')$ for $\alpha, \alpha' \in POut(T)$.  This can be done by choosing $f$ to map $e(A)$ to the measurement protocols yielded by applying Lemma 1 to $M\{A\}$, $B \in (C \setminus A)$ and $e|_{(C \setminus A)}: (C \setminus A) \to \co$ (and similarly for $e'(A)$).

Conversely, suppose two distinct observable events $e$ and $e'$ are contained within a common hyperedge.  There is thus a measurement protocol $T$ for which $e$ and $e'$ arise as $s(\alpha)$ and $s(\alpha')$ for $\alpha,\alpha' \in POut(T)$.  The minimal depth at which these protocol outcomes diverge gives the common measurement for which $e$ and $e'$ assign different outcomes.
\end{proof}

\end{appendices}

\end{document}